\newtheorem{proposition}{Proposition}
\newtheorem{definition}{Definition}
\newtheorem{remark}{Remark}
\newtheorem{lemma}{Lemma}
\newcommand{\norm}[1]{\vert\vert#1\vert\vert}
\DeclareMathOperator*{\argmin}{arg\,min}
\title{\LARGE \bf
Safe Autonomous Docking Maneuvers for a Floating Platform based on Input Sharing Control Barrier Functions}
\author{Akshit Saradagi$^*$, Avijit Banerjee, Sumeet Satpute and George Nikolakopoulos
\thanks{$^*$ Corresponding author}
\thanks{The authors are with the Robotics and Artificial Intelligence Subject at the Department of Computer Science, Electrical and Space Engineering, Lule\aa~University of Technology, Sweden. Emails: \tt{akssar@ltu.se; aviban@ltu.se; sumsat@ltu.se; geonik@ltu.se}}
}
\begin{document}

\maketitle
\thispagestyle{empty}
\pagestyle{empty}

\begin{abstract}
In this article, we present a control strategy for the problem of safe autonomous docking for a planar floating platform (Slider) that emulates the movement of a satellite. Employing the proposed strategy, Slider approaches a docking port with the right orientation, maintaining a safe distance, while always keeping a visual lock on the docking port throughout the docking maneuver. Control barrier functions are designed to impose the safety, direction of approach and visual locking constraints. Three control inputs of the Slider are shared among three barrier functions in enforcing the constraints. It is proved that the control inputs are shared in a conflict-free manner in rendering the sets defining safety and visual locking constraints forward invariant and in establishing finite-time convergence to the visual locking mode. The conflict-free input-sharing ensures the feasibility of a quadratic program that generates minimally-invasive corrections for a nominal controller, that is designed to track the docking port, so that the barrier constraints are respected throughout the docking maneuver. The efficacy of the proposed control design approach is validated through various simulations.
\end{abstract}
\begin{keywords}
Autonomous Docking, Control Barrier Functions, Safety, Visual Locking, Planar Floating Platform, Quadratic Programming, Control Applications.
\end{keywords}
\section{Introduction}
Recent technological advancements in space have significantly altered our perception of space activities. Space organizations around the globe are looking forward to autonomous robotic missions enabling critical space operations such as on-orbit service, visual servoing in the proximity of small celestial bodies, docking and active debri removal. 
Such ambitious and challenging applications necessitate development of efficient and effective autonomous guidance, navigation, and control (GNC) technologies \cite{sorgenfrei2014operational}. 
In order to support the complex technological progress and demonstrate and validate complex GNC strategies, hardware-in-loop spacecraft test-bed facilities have emerged as an economical alternative to the costlier in-space demonstrations \cite{elissa, testf2}. Most of these test-beds emulate spacecraft's orbital motion by a floating platform over a flat table where an air cushion allows the platform to levitate over the surface. 
%

In this article, we propose safe and autonomous docking strategies for the planar floating satellite platform (referred to as Slider in the rest of the article)~\cite{Slider_RAL, Docking_Mechanism} that has been designed to emulate zero-gravity motion of a spacecraft. This floating platform (Figure \ref{Ref_frame}) is supported by three air bearings, which release compressed air to form an air cushion that allows it to levitate over a smooth surface. The air cushion provides friction-less translation and rotational motion on a relatively flat surface. 
Such a floating platform, operating on the flat surface, provides a space-representative environment (though on the 2-D operational surface of the flat-table) to develop and evaluate advanced GNC algorithms.  More insights on the platform's design are documented in \cite{Slider_RAL, Docking_Mechanism}.
\begin{figure}
\centering
\centerline{\includegraphics [width=\columnwidth] {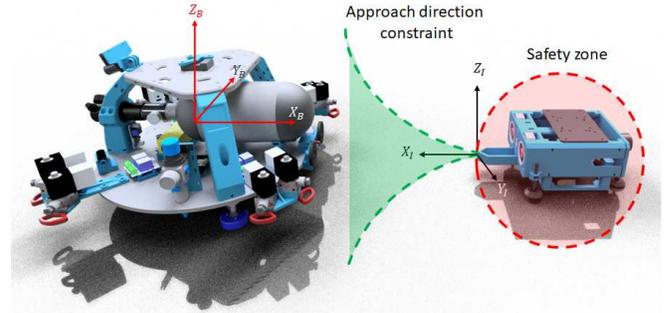}}
\caption{A schematic illustrating the frames of reference, the safety and the direction of approach constraints involved in the scenario where the planar floating platform, Slider (on the left), docks onto a docking port (on the right). }\label{Ref_frame}
\end{figure}

In Figure \ref{Ref_frame}, we illustrate an autonomous and safe docking scenario that is being investigated in this work. The aim is to design a control strategy for the Slider platform that achieves docking, while respecting the safety and direction of approach constraints (necessary for smooth docking). A layer of autonomy is added to the operation by requiring that a visual lock be maintained on the docking port throughout the maneuver. 

In the past decade, the control barrier functions (CBF) approach has emerged as an elegant framework for enforcing state constraints in control systems \cite{barrier_main_ref, main_2}. In this technique, barrier functions are constructed in such a way that their super-levels capture subsets of the state-space where the constraints are satisfied and the intersection of these sets ($\mathcal{H}$) is rendered forward invariant and asymptotically stable, that is, $x(t_0) \in \mathcal{H}$ implies $x(t) \in \mathcal{H}$ for all $t\geq t_0$ and $x(t_0) \notin \mathcal{H}$ implies $x(t) \rightarrow \mathcal{H}$ as $t \rightarrow \infty$. Methods have been devised for scenarios where the constraints are time-varying \cite{time_varying_human_assist, time_varying_ttl} and where attractivity to the safe set is ensured in finite time \cite{finite-time-tbd}. Minimum-norm quadratic programs are then designed to enforce the barrier constraints, while minimally deviating from a nominal controller. In this article, we present a multi-CBF based solution to the safe and autonomous docking problem for the Slider platform.

\textbf{Statement of contributions}: In contrast to the previously established state of the art, the contributions of this article are as follows. 
(a) A Control Barrier Functions based provably safe autonomous docking mechanism is presented for the Slider, that takes into account the safety aspects around the docking station and direction of approach constraints. A layer of autonomy is encoded into the design by imposing the constraint of visual lock on the docking port. This scenario presents us with an opportunity to analyse the unexplored and challenging scenario where multiple inputs are shared among multiple barrier functions. (b) The results from literature concerning single-input sharing among multiple barriers, presented in \cite{Sharing_Journal, Sharing_Automation} and multi-input sharing among barrier functions with non-intersecting boundaries in \cite{Multiple-barriers}, cannot be employed for the scenario in this article. We present theoretical proof that three control inputs of the Slider are shared harmoniously among three barriers in rendering the sets defining safety and visual locking forward invariant. (c) For the scenario where the Slider is initialized in the safe zone but with no visual lock on the docking port, we prove the asymptotic convergence to the visual locking mode and then present a control strategy to achieve finite-time convergence to the visual locking mode.


The rest of the article is organised as follows. In Section \ref{Sec:Notations}, we define the notations used in this work and recall important results from the theory of control barrier functions. In Section \ref{Sec:Formulation}, we present the dynamics of the Slider platform, derive the barrier functions capturing safety and visual locking and define our problem statement. In Section \ref{CBFs},
we present a two-loop control architecture used for the Slider and derive the control barrier functions that help in establishing safety and visual locking. In Section \ref{Sec:input_sharing}, we define the control sharing property among multiple functions, analyze the feasibility of control sharing and show that the CBFs harmoniously share the control inputs in establishing invariance of the safe sets that ensure safety and visual locking throughout the docking maneuver. We present our simulation results in Section \ref{Sec:Simulation_Results} followed by concluding remarks in Section \ref{Sec:Conclusions}.

\section{Notations and Preliminaries}\label{Sec:Notations}
We denote the set of real numbers by $\mathbb{R}$ and by
$\mathbb{S}^1$, we denote the unit circle. The notation $L_{f} h(x)$, is used to denote the Lie derivative of a continuously differentiable function $h(x)$ along the vector field $f(x)$, that is, $\frac{\partial h(x)}{\partial x} f(x)$. The boundary of a set $\mathcal{S}$ is denoted as $\partial\mathcal{S}$. A continuous function $\alpha: (-b,a)\rightarrow (-\infty,\infty)$ is said to be an extended class-${\cal K}$ function if $\alpha(0)=0$ and it is strictly increasing.  

Next, we recall some results concerning control barrier functions (CBFs) from \cite{barrier_main_ref, main_2}, which will be used in this work to design a control strategy for the Slider platform that ensures safety and visual locking throughout the docking maneuver. Consider a control-affine dynamical system 
\begin{equation}
    \dot{x}=f(x)+g(x)u, \;\;\; x \in \mathcal{X}\subset \mathbb{R}^n, u \in \mathcal{U} \subset \mathbb{R}^m  
    \label{affine}
\end{equation}
where $f$ and $g$ are Lipschitz continuous functions. Let $\mathcal{S}\subset\mathcal{X}$ be the region of the state-space, which is deemed as a safe region for the operation of \eqref{affine}. Let $h(x): \mathcal{D}\subset\mathcal{X}\rightarrow \mathbb{R}$ be a continuously differentiable function, with $\mathcal{S}\subset\mathcal{D}$, such that $\mathcal{S}:=\{x \in$ $\left.\mathcal{X} \mid h(x) \geq 0\right\}$, that is, $\mathcal{S}$ is a zero super-level set of the function $h(x)$. The set $\mathcal{S}$ is rendered safe if the control input to \eqref{affine} ensures positive invariance of the set $\mathcal{S}$, that is, $x(t_0) \in \mathcal{S}$ implies $x(t) \in \mathcal{S}$ for all $t\geq t_0$. In addition, if $\mathcal{S}$ is rendered asymptotically stable, when the system is initialized in $\mathcal{D}\setminus\mathcal{S}$, a measure of robustness can be incorporated into the notion of safety.

The following definition introduces the notion of a control barrier function and presents a condition, the verification of which ensures the safety of the dynamical system \eqref{affine}.
\begin{definition}[A control barrier function \cite{barrier_main_ref}] \label{Barrier_Defn}
A continuously differentiable function $h(x): \mathcal{D} \rightarrow \mathbb{R}$ is a control barrier function, if there exists a real parameter $\gamma>0$ and a generalized class-$\mathcal{K}$ function $\alpha$, such that for all $x \in \mathcal{D}$,
\begin{equation}
\sup _{u \in \mathcal{U}}\left\{L_{f} h(x)+L_{g} h(x) u+\gamma \alpha(h(x))\right\} \geq 0.
\label{Barrier_Condition}
\end{equation}
\end{definition}

The forward invariance of $\mathcal{S}$ ($\dot{h}\geq0$ on $\partial\mathcal{S}$) and asymptotic stability of $\mathcal{S}$ ($\dot{h}>0$ in $\mathcal{D}\setminus\mathcal{S}$) are captured together in the condition \eqref{Barrier_Condition}. The existence of at least one input from the admissible set $\mathcal{U}$, which renders \eqref{Barrier_Condition} feasible, enables the design of a control input that ensures safety of the dynamical system \eqref{affine} under certain conditions. This is captured in the following proposition.
\begin{proposition}[\cite{barrier_main_ref}] \label{main_prop}
Let $\mathcal{S}\subset\mathcal{D}$ be the zero super-level set of a continuously differentiable function $h : \mathcal{D}\subset\mathcal{X}\rightarrow \mathbb{R}$. If $h$ is a control barrier function on $\mathcal{D}$ and $\frac{\partial h(x)}{\partial x}\neq 0$ on $\partial\mathcal{C}$, then any Lipschitz continuous controller $u(x) \in K_h(x)$, where \begin{equation}
K_h(x)=\{u \in \mathcal{U} \mid L_{f} h(x)+L_{g}h(x) u+\gamma \alpha(h(x)) \geq 0\}
\label{control_family}
\end{equation}
for the system \eqref{affine} renders $\mathcal{S}$ forward invariant and asymptotically stable in $\mathcal{D}$.
\end{proposition}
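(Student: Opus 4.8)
The plan is to track the scalar quantity $h(x(t))$ along closed-loop trajectories and to squeeze it between zero and the solution of a scalar comparison ODE. First I would observe that since $f$ and $g$ are Lipschitz and $u(\cdot)$ is Lipschitz by hypothesis, the closed-loop vector field $x \mapsto f(x) + g(x)u(x)$ is locally Lipschitz on $\mathcal{D}$, so every initial condition in $\mathcal{D}$ admits a unique maximal solution $x(t)$. Differentiating $h$ along such a solution and using $u(x(t)) \in K_h(x(t))$ from \eqref{control_family} gives the differential inequality $\dot{h}(x(t)) = L_f h(x(t)) + L_g h(x(t)) u(x(t)) \ge -\gamma\,\alpha(h(x(t)))$.

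Next I would introduce the scalar comparison system $\dot{y} = -\gamma\,\alpha(y)$ with $y(t_0) = h(x(t_0))$; assuming $\alpha$ locally Lipschitz so that this ODE has a unique solution, the comparison principle yields $h(x(t)) \ge y(t)$ on the common interval of existence. Because $\alpha$ is an extended class-$\mathcal{K}$ function, $\alpha(0)=0$ and $\alpha$ preserves sign, so $y \equiv 0$ is an equilibrium of the comparison system. For forward invariance, let $x(t_0) \in \mathcal{S}$, that is, $y(t_0) = h(x(t_0)) \ge 0$; then $\dot y = -\gamma\alpha(y) \le 0$ whenever $y \ge 0$ and $y$ cannot cross the equilibrium, so $y(t) \ge 0$, whence $h(x(t)) \ge y(t) \ge 0$ and $x(t) \in \mathcal{S}$ for all $t \ge t_0$. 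For asymptotic stability, let $x(t_0) \in \mathcal{D}\setminus\mathcal{S}$, that is, $y(t_0) < 0$; then $\dot y = -\gamma\alpha(y) > 0$, so $y(t)$ increases monotonically to $0$, and therefore $\liminf_{t\to\infty} h(x(t)) \ge 0$. Consequently either the trajectory enters $\mathcal{S}$ in finite time (and then remains there by the forward invariance just shown), or $h(x(t)) \to 0$ and $x(t)$ approaches $\partial\mathcal{S} \subset \mathcal{S}$.

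I expect the main obstacle to be the last implication — upgrading $h(x(t)) \to 0$ to $\mathrm{dist}(x(t),\mathcal{S}) \to 0$ — together with confirming that the maximal solution does not escape $\mathcal{D}$. This is precisely where the regularity hypothesis $\partial h(x)/\partial x \neq 0$ on $\partial\mathcal{S}$ is used: it ensures that $\partial\mathcal{S}$ coincides with the regular zero level set $\{h = 0\}$, so that, combined with continuity of $h$ and (local) boundedness of the trajectory in $\mathcal{D}$, a standard compactness argument rules out $\mathrm{dist}(x(t),\mathcal{S})$ remaining bounded away from zero while $h(x(t)) \to 0$. The remaining points — the validity of the comparison principle and the fact that forward invariance of $\mathcal{S}$ keeps the trajectory inside $\mathcal{D}$ so that it is defined for all $t \ge t_0$ — are routine.
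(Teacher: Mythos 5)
The paper does not prove this proposition---it is recalled verbatim from the cited reference---so there is no in-paper argument to compare against; your comparison-lemma proof is the standard one for this result and is essentially correct. You correctly isolate the two genuine technical points (local Lipschitzness of $\alpha$ for the scalar comparison ODE, and the role of $\partial h/\partial x\neq 0$ on $\partial\mathcal{S}$ --- note the statement's ``$\partial\mathcal{C}$'' is a typo for $\partial\mathcal{S}$ --- in converting $h(x(t))\to 0$ into convergence to the set), and the remaining steps are sound.
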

\section{Problem Formulation}\label{Sec:Formulation}
In this section, we present the dynamics of the Slider platform, derive the barrier functions that capture the safety and visual locking constraints and define our problem statement.
\subsection{Dynamics of the Slider platform}
The design of the planar floating satellite platform and the coordinate frames used to describe its equations of motion are depicted in Figure \ref{Ref_frame}. The inertial frame of reference ($\mathbb{I}=\{x_I,y_I,z_I\}$) is attached at the end of the docking port, which is assumed to be fixed on a flat table.
The body-fixed frame ($\mathbb{B}=\{x_B,y_B,z_B\}$) is attached to the Slider's Centre of Mass (CoM). 
The equations governing the motion of the Slider platform (derived in \cite{Slider_RAL}) in the state-space form are:
\begin{equation}\label{Eq:Slider_dynamics}
   \left[ \begin{matrix}
   {\dot{r_x}}  \\
   {\dot{r_y}}  \\
   {\dot{\theta }}  \\
   {{{\dot{v}}}_{x}}  \\
   {{{\dot{v}}}_{y}}  \\
   {\dot{\omega}_z}  \\
 \end{matrix} \right]=\left[ \begin{matrix}
   {{v}_{x}}\cos \theta -{{v}_{y}}\sin \theta   \\
   {{v}_{x}}\sin \theta +{{v}_{y}}\cos \theta   \\
   \omega_z   \\
   \omega_z {{v}_{y}}+\frac{1}{m}{f}_{x}  \\
   -\omega_z {{v}_{x}}+\frac{1}{m}{f}_{y}  \\
   \frac{\tau_z }{{{I}_{zz}}}  \\
\end{matrix} \right]
 \end{equation}
 where, $r_x,r_y \in\mathbb{R}$ describe the position of the Slider with respect to the inertial frame $\mathbb{I}$. The angle $\theta\in\mathbb{S}^1$ represents the orientation of the Slider with respect to the inertial frame. As various sensors and actuators are attached to the Slider's body, we prefer to consider the linear velocities $v_x, v_y \in \mathbb{R}$, directed along $x_B$ and $y_B$ and the rotational velocity $\omega_z$ directed along $z_B$, in the body-frame $\mathbb{B}$. The state of the Slider $(r_x,r_y,\theta,v_x,v_y,\omega_z)$ evolves in $\mathcal{Q}_1=\mathbb{R}^5\times \mathbb{S}^1$. The torque applied along $z_B$ is denoted by $\tau_z\in\mathbb{R}$ and $I_{zz}\in \mathbb{R}$ denotes the principal moment of inertia along the $z_B$ direction. The platform has been designed such that the other components of the moment of inertia matrix are negligible. The mass of the platform is denoted by $m$ and the forces along $x_B$ and $y_B$ axes of the body frame are denoted by $f_x$ and $f_y$ respectively. The Coriolis effects influencing the motion of the Slider feature in the dynamics through the terms $\omega_z {{v}_{y}}$ and $-\omega_z{{v}_{x}}$.
\subsection{Safe autonomous docking with visual locking}
The docking maneuver proposed in this work comprises of three key features. We present their formulations and elaborate on the necessity of incorporating these features into the design of the docking maneuvers in the following subsections.
\subsubsection{Tracking the docking port}
The port onto which the Slider platform is expected to dock is fixed on a flat table and the inertial frame of reference $\mathbb{I}$ is fixed to the tip of the docking port (Figure \ref{Ref_frame}). The docking port is oriented along the axis $x_I$ of the inertial frame. 
The Slider is expected to approach the docking port asymptotically with the opposite orientation, that is  $\theta(t) \rightarrow \pi$ as $t \rightarrow \infty$. If the Slider blindly tracks the docking port, there is a possibility of the Slider, on its way to the docking port, crashing into the docking station, which the docking port is part of. This necessitates the defining of unsafe zones around the docking port and the imposition of constraints on the direction of approach to the docking port. 
\subsubsection{Safe approach to the docking port} \label{Sub:Safety}
It is critical that the Slider approaches the docking port, while maintaining a safe distance from the docking station. We identify a region around the docking station (indicated in red, in Figure \ref{safety}) into which the Slider is prohibited from entering. It is also expected that the Slider approaches the docking port through a tapering funnel as illustrated in Figure \ref{safety} and identified in green. This is to enable smooth alignment of the ports as the Slider nears the docking port. 

A Cardioid centered at the docking port (origin of the inertial frame) and oriented along the axis $x_I$ is plotted in Figure \ref{cardioid_plot}. At the origin, the slopes of the symmetric curves of the Cardioid above and below the $x_I$-axis are zero. This cusp at the origin encodes the tapering funnel region, indicated in green in Figure \ref{safety}. The docking station (shown in red in Figure \ref{cardioid_plot}) is enclosed by the Cardioid and the safety constraint is encoded by defining as safe, the region outside the Cardioid
\begin{equation}
(x^2+y^2)^2+4ax(x^2+y^2)-4a^2y^2 \geq 0
\label{outside_cardioid}
\end{equation}
where $a$ is a positive real parameter that decides the size of the Cardioid. If it is ensured that the Slider remains outside the Cardioid throughout the docking maneuver, we indirectly ensure, in one shot, that both the safety and the direction of approach constraints are satisfied. 
 \begin{figure}
 \centering
 \subfloat[The area around the docking station, indicated in red, is considered unsafe. The area inside the tapering funnel, indicated in green, is considered a safe region to approach the docking port. \label{safety}]{\includegraphics[scale=0.85]{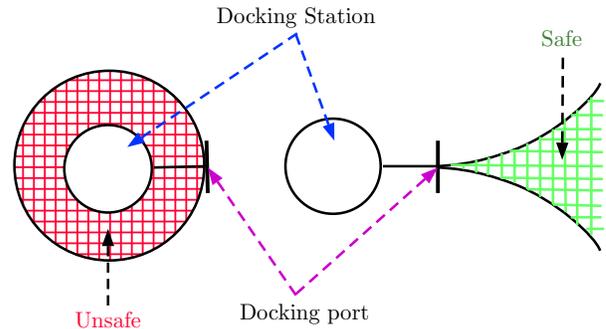}}
 \vspace{0.1cm}
 \subfloat[A Cardioid as a barrier function \label{cardioid_plot}]{\includegraphics[scale=0.55]{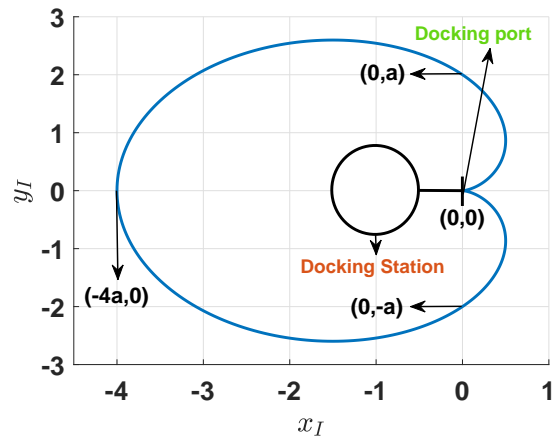}}
 \caption{Both the safety and the direction of approach constraints are captured in one shot, by considering the area outside (inside) the Cardioid to be safe (unsafe).}
 \end{figure}
\subsubsection{Visual lock on the docking port} \label{Sub:Visual_lock}
By imposing the constraint that once the docking port comes into the visual range of the Slider, it must be contained within it throughout the docking maneuver, we add a layer of autonomy to the mission. We term this constraint as visual locking. This enables the Slider to accomplish the docking maneuver using on-board sensors  and with minimal interaction with an external sensing infrastructure. Here we assume that a vision sensor on-board the Slider offers a range of $\theta_v$ around the orientation of the Slider. As illustrated in Figure \ref{fig:vis_lock}, the visual locking constraint is imposed by requiring that the line joining the Slider and the origin of the inertial frame (fixed at the docking port) be within the zone $[\theta_s-\theta_v, \theta_s+\theta_v]$. This leads us to the constraint 
\begin{equation}
    \tan^{-1}\left(\frac{-r_y}{-r_x}\right)\in[\theta_s-\theta_v, \theta_s+\theta_v].
    \label{Lock_main}
\end{equation}
\begin{figure}
    \centering
    \includegraphics[width=0.7\columnwidth]{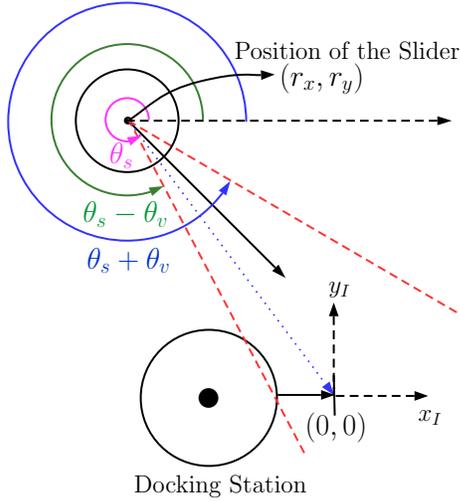}
    \caption{Illustration of the visual locking constraint. The blue dotted line connecting the Slider and the docking port must always be maintained in the visual zone contained within the dashed red lines.}
    \label{fig:vis_lock}
\end{figure}
\textbf{Problem Statement}: Design a feedback control strategy ${f_x, f_y, \tau_z}$ for the Slider dynamics in \eqref{Eq:Slider_dynamics}, to achieve safe autonomous docking with visual locking, as defined in Subsections \ref{Sub:Safety} and \ref{Sub:Visual_lock}, given the safety parameter $a$ defining the Cardioid and the parameter $\theta_v$ defining the range of visual sensing.
\section{Control barrier functions for safety and visual locking}\label{CBFs}
In this Section, we present a two-loop control architecture for the Slider and derive control barrier functions satisfying conditions in Definition \ref{Barrier_Defn}, that help in establishing the two constraints introduced in the previous section. 
\subsection{Control Architecture}
We employ a two-loop control architecture for the Slider. The inner-loop has access to the inputs (forces ${f_x, f_y}$ and torque $ \tau_z$) of the Slider and it is designed to track the velocities $\{v_x^d, v_y^d, \omega_z^d\}$ commanded by the outer-loop. Feedback linearization is first used to negate the nonlinearities in the dynamics \eqref{Eq:Slider_dynamics} and then a linear controller is designed to track the velocities commanded by the inner-loop. The controller  
\begin{equation}
\begin{aligned}
    f_x&=m(-v_y\omega_z+c_1(v_x-v_x^d)+\dot{v}_x^d) \\
    f_y&=m(v_x\omega_z+c_2(v_y-v_y^d)+\dot{v}_y^d)\\
    \tau_z&=I_{zz}(c_3(\omega_{z}-\omega_{z}^d)+\dot{\omega}_{z}^d)
    \end{aligned}
    \label{outer-loop}
\end{equation}
drives the errors $e_{v_x}=(v_x-v_x^d)$, $e_{v_y}=(v_y-v_y^d)$ and $e_{\omega_z}=(\omega_z-\omega_z^d)$ to zero exponentially and the positive constants $c_1,  c_2$ and $c_3$ are chosen to achieve the desired rates of convergence. The vector $\{\bar{v}_x,\bar{v}_y,\omega_z\}$ is treated as input to the outer-loop 
\begin{equation}\label{Eq:Slider_reduce}
   \left[ \begin{matrix}
   {\dot{r_x}}  \\
   {\dot{r_y}}  \\
   {\dot{\theta }}
 \end{matrix} \right]=\left[ \begin{matrix}
   \bar{v}_x   \\
   \bar{v}_y   \\
   \omega_z
\end{matrix} \right], \;\;\; \left[ \begin{matrix}
   {\bar{v}_x}  \\
   {\bar{v}_y}
 \end{matrix} \right]=\left[ \begin{matrix}
   {{v}_{x}}\cos \theta -{{v}_{y}}\sin \theta   \\
   {{v}_{x}}\sin \theta +{{v}_{y}}\cos \theta
\end{matrix} \right]
 \end{equation}
where $(r_x, r_y, \theta) \in \mathcal{Q}_2=\mathbb{R}^2\times \mathbb{S}^1$. We derive our control strategy for safe autonomous docking through these inputs. The velocity commands $(v_x^d, v_y^d)$ are then computed using the invertible transformation relating $\{\bar{v}_x,\bar{v}_y\}$ and $\{v_x,v_y\}$ in \eqref{Eq:Slider_reduce}. Under the assumption that the magnitudes of the linear and angular velocities that can be faithfully tracked by the inner-loop are bounded by $a>0$ and $b>0$ respectively, the admissible control input set is set to $\mathcal{U}=[-a \;\; a]\times[-a \;\; a]\times[-b \;\; b]\subset\mathbb{R}^3$. 

Next, we define three continuously differentiable functions, whose zero super-level sets define the subsets of the state-space where the constraints of safety and visual locking are satisfied. We then show that these functions are control barrier functions and satisfy the conditions in Definition \ref{Barrier_Defn}. 
\subsection{Barrier functions for visual locking}
From the visual locking condition \eqref{Lock_main}, we derive the following barrier functions
\begin{align}
    h_2(x)&=\tan^{-1}\left(\frac{-r_y}{-r_x}\right)-(\theta_s-\theta_v) \label{barrier_2}\\
    h_3(x)&=(\theta_s+\theta_v)-\tan^{-1}\left(\frac{-r_y}{-r_x}\right) \label{barrier_3}
\end{align}
If the state of the Slider $x=[r_x,r_y,\theta]'\in \mathcal{Q}_2  $ remains in the zero super level sets $\mathcal{V}_2=\{x \in \mathcal{Q}_2 \mid h_2(x) \geq 0\}$ and $\mathcal{V}_3=\{x \in \mathcal{Q}_2 \mid h_3(x) \geq 0\}$ for all $t \geq 0$, then a visual lock on the docking port, as formulated in \eqref{Lock_main}, is maintained throughout the docking maneuver.  The following Lemma presents verification that $h_2(x)$ and $h_3(x)$ are control barrier functions as in Definition~\ref{Barrier_Defn} on a sets $\mathcal{D}_2=\{x \in \mathcal{Q}_2 \mid h_2(x)\in(-b_1,\infty), b_1>0\}\supset\mathcal{V}_2$ and $\mathcal{D}_3=\{x \in \mathcal{Q}_2 \mid h_3(x)\in(-b_2,\infty), b_2>0\}\supset\mathcal{V}_3$ respectively.

Note that the functions $h_2$ and $h_3$ are continuously differentiable functions except in $\mathcal{R}=\{0,0\}\times\mathbb{S}^1$ ($r_x=r_y=0$), where the functions are undefined. This is expected and in practice not a concern, as the Slider is perfectly docked at this position and the visual locking condition need not be considered.
\begin{lemma}
\label{Lemma2}
The continuously differentiable functions $h_2(x)$ and $h_3(x)$ are control barrier functions in $\mathcal{Q}_2\setminus\mathcal{R}$, for the dynamics \eqref{Eq:Slider_reduce}. The set of control inputs $K_{h_2}(x)$ and $K_{h_3}(x)$ as defined in \eqref{control_family} are non-empty for all $x \in \mathcal{D}_2$ and $x \in \mathcal{D}_3$ respectively and any Lipschitz continuous controller $u(x)$ in $K_{h_2}(x)$ and $K_{h_3}(x)$ render the sets $\mathcal{V}_2$ and $\mathcal{V}_3$ forward invariant and asymptotically stable in $\mathcal{D}_2$ and $\mathcal{D}_3$ respectively.
\end{lemma}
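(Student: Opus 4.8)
The plan is to verify the control barrier function conditions of Definition~\ref{Barrier_Defn} for $h_2$ and $h_3$ on the relaxed domains $\mathcal{D}_2$ and $\mathcal{D}_3$, and then invoke Proposition~\ref{main_prop}. I would first note that the outer-loop dynamics \eqref{Eq:Slider_reduce} are control-affine with \emph{zero drift}, i.e. $f\equiv 0$ and $g\equiv I_3$, where $u=[\bar v_x,\bar v_y,\omega_z]'\in\mathcal{U}=[-a,a]\times[-a,a]\times[-b,b]$; hence $L_f h_i=0$ and $L_g h_i=\nabla h_i$ for $i\in\{2,3\}$. Differentiating on $\mathcal{Q}_2\setminus\mathcal{R}$, the first two entries of $\nabla h_2$ (and of $\nabla h_3$, up to sign) are $\tfrac{-r_y}{r_x^2+r_y^2}$ and $\tfrac{r_x}{r_x^2+r_y^2}$, which are well defined and continuous exactly when $(r_x,r_y)\neq(0,0)$; the $\theta$-component is $-1$ for $h_2$ and $+1$ for $h_3$, since the Slider's orientation $\theta$ enters $h_2,h_3$ linearly through $\theta_s$. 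This already shows $\nabla h_2$ and $\nabla h_3$ are continuous and nowhere vanishing on $\mathcal{Q}_2\setminus\mathcal{R}$ (the $\pm 1$ entry forbids $\nabla h_i=0$), so the gradient hypothesis of Proposition~\ref{main_prop} holds on $\partial\mathcal{V}_2$ and $\partial\mathcal{V}_3$.

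Next I would check \eqref{Barrier_Condition}. Since $L_f h_i+L_g h_i u+\gamma\alpha(h_i)$ is affine in $u$ and $\mathcal{U}$ is a compact box, the supremum over $\mathcal{U}$ is attained at a vertex; for $h_2$ it equals $a\big(\big|\tfrac{r_y}{r_x^2+r_y^2}\big|+\big|\tfrac{r_x}{r_x^2+r_y^2}\big|\big)+b+\gamma\alpha(h_2)\ \ge\ b+\gamma\alpha(h_2)$ on $\mathcal{Q}_2\setminus\mathcal{R}$, and symmetrically for $h_3$. On $\mathcal{D}_2$ one has $h_2>-b_1$, so $\alpha(h_2)>\alpha(-b_1)$ with $\alpha(-b_1)<0$ by strict monotonicity of $\alpha$; choosing $\gamma>0$ small enough that $b+\gamma\alpha(-b_1)\ge 0$ (equivalently, $b_1$ small for a fixed $\gamma$) makes the supremum strictly positive throughout $\mathcal{D}_2$, and the identical choice works for $h_3$ on $\mathcal{D}_3$. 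This establishes that $h_2,h_3$ are control barrier functions; moreover, since the supremum is attained, the maximizing vertex of $\mathcal{U}$ lies in $K_{h_2}(x)$ (resp. $K_{h_3}(x)$), so these sets are non-empty for every $x\in\mathcal{D}_2$ (resp. $\mathcal{D}_3$).

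Finally, with the control barrier function property in hand, the nonvanishing gradients from the first step, and $\mathcal{V}_2\subset\mathcal{D}_2$, $\mathcal{V}_3\subset\mathcal{D}_3$ holding by construction, Proposition~\ref{main_prop} applies directly and yields that any Lipschitz continuous $u(x)\in K_{h_2}(x)$ (resp. $K_{h_3}(x)$) renders $\mathcal{V}_2$ (resp. $\mathcal{V}_3$) forward invariant and asymptotically stable in $\mathcal{D}_2$ (resp. $\mathcal{D}_3$), which is the claim. I expect the only genuinely delicate point — and the reason $\mathcal{D}_i$ is defined via a \emph{finite} lower bound $-b_i$ on $h_i$ — to be ensuring that the positive margin $b$ supplied by the angular-velocity channel dominates the possibly negative term $\gamma\alpha(h_i)$ \emph{uniformly} over the domain $\mathcal{D}_i$ (which is unbounded in $r_x,r_y$); everything else is routine differentiation plus the invocation of Proposition~\ref{main_prop}, carried out entirely within $\mathcal{Q}_2\setminus\mathcal{R}$, where $h_2$ and $h_3$ are defined and continuously differentiable.
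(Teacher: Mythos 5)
Your proof is correct and follows essentially the same route as the paper: compute the gradients of $h_2,h_3$ (noting the $\theta$-component is $\pm1$, so the gradient never vanishes and its norm is bounded below by $1$), show the admissible input set supplies a uniform margin that dominates the worst-case term $-k\,\alpha(-b_1)$ on $\mathcal{D}_i$, and conclude via Proposition~\ref{main_prop}. The only cosmetic difference is that you extract the margin by maximizing over the vertices of the box $\mathcal{U}$ and isolating the $\omega_z$ channel, whereas the paper normalizes the half-space constraint by $\bar b=\norm{\nabla h_2}$ and intersects it with a ball of radius $\min(a,b)$ inscribed in $\mathcal{U}$ under the condition $\min(a,b)>k_2\alpha_2(b_1)$ --- the same idea, and your version is if anything slightly more explicit about uniformity over the unbounded domain.
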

\begin{proof}
In showing that $h_2(x)$ and $h_3(x)$ are CBFs on $\mathcal{D}_2$ and $\mathcal{D}_3$ respectively, the existence of the constants $k_2, k_3 > 0$ and extended class$-\mathcal{K}$ functions $\alpha_2$ and $\alpha_3$ such that 
\begin{align}
\sup _{u \in \mathcal{U}}\{b_{11}v_x+b_{12}v_y+b_{13}\omega_z\} & \geq -k_2 \alpha_2(h_2) \label{barrier_condition_2}\\
\sup _{u \in \mathcal{U}}\{c_{11}v_x+c_{12}v_y+c_{13}\omega_z\} & \geq -k_3 \alpha_3(h_3) \label{barrier_condition_3}
\end{align}
where 
\begin{align*}
    b_{11} &= \frac{\partial h_2}{\partial r_x}=-\frac{r_y}{(r_x^2+r_y^2)}& \;\;  c_{11} = \frac{\partial h_3}{\partial r_x}=\frac{r_y}{(r_x^2+r_y^2)} \\
    b_{12} &= \frac{\partial h_2}{\partial r_y}=\frac{r_x}{(r_x^2+r_y^2)}& \;\; c_{12} = \frac{\partial h_3}{\partial r_y}=-\frac{r_x}{(r_x^2+r_y^2)} \\
    b_{13} &= \frac{\partial h_2}{\partial \theta}=-1 \;\;& c_{13} = \frac{\partial h_3}{\partial \theta}=1
\end{align*}
are feasible for all $x$ in $\mathcal{D}_2$ and $\mathcal{D}_3$ respectively, must be shown. The equality in the set definition 
\begin{equation}
K_{h_2}(x)=\{u \in \mathcal{U} \mid b_{11}v_x+b_{12}v_y+b_{13}\omega_z  \geq -k_2 \alpha_2(h_2)\}
\label{kh}
\end{equation}
defines a hyper-plane in $\mathbb{R}^3$. For the subset of $\mathcal{D}_2$, where $h_2(x)>0$, the hyper-plane is offset from the origin of $\mathbb{R}^3$ such that a ball of non-zero radius centered at the origin is contained in $K_{h_2}(x)$. For the subset of $\mathcal{D}_2$, where $h_2(x)<0$, the hyper-plane is offset such that the origin of $\mathbb{R}^3$ is not contained in $K_{h_2}(x)$. Now, consider the length of the vector $(b_{11},\; b_{12},\; b_{13})$,  $\bar{b}=\sqrt{b_{11}^2+b_{12}^2+b_{13}^2}$. It is easy to see that $\text{inf} (\bar{b})=1$. Dividing both sides of the inequality in \eqref{kh} by $\bar{b}$, we have
\begin{equation}
\frac{b_{11}}{\bar{b}}v_x+\frac{b_{12}}{\bar{b}}v_y+\frac{b_{13}}{\bar{b}}\omega_z \geq -k_2 \alpha_2(h_2) \geq \frac{-k_2 \alpha_2(h_2)}{\bar{b}}.
\label{crucial}
\end{equation}
By choosing the constant $k_2>0$, the extended class$-\mathcal{K}$ function $\alpha_2$ and the constant $b_1>0$ that parameterizes $\mathcal{D}_2$ such that $\min(a,b)>k_2\alpha_2(b_1)$, the feasibility of \eqref{barrier_condition_2} is ensured in the set $\mathcal{D}_2$ due to the non-empty intersection between $K_{h_2}(x)$ and a ball of radius $\min(a,b)$ contained in the admissible input set $\mathcal{U}$. Therefore $h_2(x)$ is a CBF. By Proposition \ref{main_prop}, any Lipschitz continuous controller chosen from $K_{h_2}(x)$ establishes the forward invariance and asymptotic stability of $\mathcal{V}_2$ in $\mathcal{D}_2$.

We omit the proof that $h_3(x)$ is a CBF, as it follows along the same lines as the proof for $h_2(x)$.  
\end{proof}
\subsection{A Cardioid as a barrier function for safety}
To impose the safety and direction of approach constraints in the docking maneuver in one shot, we made a case in Subsection \ref{Sub:Safety}, for the use of the region outside the Cardioid $\mathcal{H}=\{x \in \mathcal{Q}_2 \;|\; h_1(x) \geq 0\}$, where
\begin{align}
    h_1(x)=(r_x^2+r_y^2)^2+4ar_x(r_x^2+r_y^2)-4a^2r_y^2
    \label{barrier_1}
\end{align}
as the safe set. In showing that $h_1(x)$ is a control barrier function as in Definition \ref{Barrier_Defn}, the existence of the constants $k_1>0$ and class$-\mathcal{K}$ function $\alpha_1$, such that 
\begin{equation}
\begin{aligned}
\sup _{u \in \mathcal{U}}\{L_{f} h(x)+L_{g} h(x) u\}&=\sup _{u \in \mathcal{U}}\{a_{11}v_x+a_{12}v_y+a_{13}\omega_z\} \\
& \geq -k_1 \alpha_1(h_1(x))
\label{barrier_condition_1}
\end{aligned}
\end{equation} 
where
\begin{equation}
\begin{aligned}
    a_{11} &= \frac{\partial h_1}{\partial r_x}=-(4(r_x^2+r_y^2)x+4a(r_x^2+r_y^2)+8ar_x^2) \\
    a_{12} &= \frac{\partial h_1}{\partial r_y} = -(4(r_x^2+r_y^2)r_y+8ar_xr_y-8a^2r_y) \\
    a_{13} &= \frac{\partial h_1}{\partial \theta}=0
\end{aligned}
\label{coeffs}
\end{equation}
is feasible in a set $\mathcal{D}_1 \supset\mathcal{H}$ must be shown. Note that the gradient of $h_1(x)$ defined by the state-dependent parameters $a_{1j}$ in \eqref{coeffs} vanishes at the origin of the $(r_x, r_y)$ plane and the parameters assume arbitrarily small values about the boundary $\partial\mathcal{H}$. The arguments made through the equation \eqref{crucial} in the proof of Lemma \ref{Lemma2} are not valid for the function $h_1(x)$ in the presence of the input constraints defined through the set $\mathcal{U}$. However, the feasibility of the condition \eqref{barrier_condition_1} can be established for all $x \in \mathcal{H}$, where $h_1(x)\geq0$. The consequences of this fact with regards to invariance of the safe set are captured in the following Lemma.
\begin{lemma}
\label{Lemma1}
The set of control inputs $K_{h_1}(x)$ as defined in \eqref{control_family} is non-empty for all $x \in \mathcal{H}$ and any Lipschitz continuous controller $u(x)\in K_{h_1}(x)$ renders the set $\mathcal{H}$ forward invariant under the dynamics \eqref{Eq:Slider_reduce}.
\end{lemma}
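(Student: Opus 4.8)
The plan is to exploit the one structural feature that distinguishes $h_1$ from the visual-locking barriers of Lemma~\ref{Lemma2}: whereas there the barrier inequality had to hold on a strictly larger set $\mathcal{D}_i \supset \mathcal{V}_i$, which forced the rescaling argument of \eqref{crucial} together with the sizing condition on $\min(a,b)$, here I only need the inequality \eqref{barrier_condition_1} to hold on $\mathcal{H}$ itself, and on $\mathcal{H}$ it holds for free. So the difficulty of the vanishing/arbitrarily small gradient of $h_1$ near $\partial\mathcal{H}$ is sidestepped rather than overcome.

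First I would note that, along the outer-loop dynamics \eqref{Eq:Slider_reduce}, $h_1$ depends only on $(r_x,r_y)$ and $a_{13}=0$ in \eqref{coeffs}, so $\dot{h}_1 = a_{11}\bar{v}_x + a_{12}\bar{v}_y$ and the defining condition of $K_{h_1}(x)$ in \eqref{control_family} reduces to $a_{11}\bar{v}_x + a_{12}\bar{v}_y \geq -k_1\alpha_1(h_1(x))$. Next, for a fixed $x \in \mathcal{H}$, the left-hand side of \eqref{barrier_condition_1} satisfies $\sup_{u\in\mathcal{U}}\{a_{11}\bar{v}_x + a_{12}\bar{v}_y + a_{13}\omega_z\} = a(|a_{11}|+|a_{12}|) \geq 0$, a maximizer being $\bar{v}_x = a\,\mathrm{sgn}(a_{11})$, $\bar{v}_y = a\,\mathrm{sgn}(a_{12})$, while $h_1(x)\geq 0$ on $\mathcal{H}$ gives $-k_1\alpha_1(h_1(x)) \leq 0$ because $\alpha_1$ is class-$\mathcal{K}$. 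Hence \eqref{barrier_condition_1} holds and $K_{h_1}(x)\neq\emptyset$ for every $x\in\mathcal{H}$, for any $k_1>0$ and any class-$\mathcal{K}$ function $\alpha_1$; in particular the input bounds in $\mathcal{U}$ cause no obstruction on $\mathcal{H}$, in contrast to Lemma~\ref{Lemma2}. This also covers the degenerate points with $a_{11}=a_{12}=0$ (the origin and the cardioid cusp, both lying on $\partial\mathcal{H}$), since there $h_1=0$ and both sides of \eqref{barrier_condition_1} vanish.

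For forward invariance I would argue directly with a comparison argument rather than via Proposition~\ref{main_prop}. Given any Lipschitz selection $u(x) \in K_{h_1}(x)$ and the resulting closed-loop trajectory $x(t)$ with $x(t_0)\in\mathcal{H}$, we have $\dot{h}_1(x(t)) \geq -k_1\alpha_1(h_1(x(t)))$ along the trajectory. Comparing with the scalar initial value problem $\dot{y} = -k_1\alpha_1(y)$, $y(t_0) = h_1(x(t_0)) \geq 0$ (taking $\alpha_1$ locally Lipschitz so that the solution is unique), whose solution stays non-negative for all $t\geq t_0$ because $y\equiv 0$ is an equilibrium ($\alpha_1(0)=0$), the comparison lemma yields $h_1(x(t)) \geq y(t) \geq 0$, i.e. $x(t)\in\mathcal{H}$ for all $t\geq t_0$. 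This establishes forward invariance of $\mathcal{H}$.

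The main obstacle is conceptual rather than computational: the barrier inequality cannot be certified on any neighbourhood $\mathcal{D}_1\setminus\mathcal{H}$, because the gradient of $h_1$ becomes arbitrarily small near $\partial\mathcal{H}$ and vanishes at the origin, so no analogue of \eqref{crucial} survives under the input bounds $\mathcal{U}$; consequently the conclusion must be restricted to forward invariance and cannot include the asymptotic-stability/robustness statement obtained in Lemma~\ref{Lemma2}. For the same reason, Proposition~\ref{main_prop} cannot be invoked verbatim, since its hypothesis $\partial h/\partial x \neq 0$ on $\partial\mathcal{H}$ fails at the cusp $r_x = r_y = 0$; the direct comparison argument above, which does not require that hypothesis, is the route I would take.
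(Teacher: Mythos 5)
Your proof is correct, and the feasibility half is essentially the argument the paper intends: on $\mathcal{H}$ one has $h_1(x)\geq 0$, hence $-k_1\alpha_1(h_1(x))\leq 0$, while $u=0$ (or your explicit maximizer $\bar{v}_x=a\,\mathrm{sgn}(a_{11})$, $\bar{v}_y=a\,\mathrm{sgn}(a_{12})$) makes the left-hand side of \eqref{barrier_condition_1} nonnegative, so $K_{h_1}(x)\neq\emptyset$ for any $k_1>0$ and any class-$\mathcal{K}$ function $\alpha_1$ --- exactly the conclusion the paper reaches by pointing back to the ``origin lies in the half-space when $h\geq 0$'' step of Lemma~\ref{Lemma2}, after conceding that the rescaling step \eqref{crucial} fails for $h_1$ under the input bounds. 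Where you genuinely diverge is the invariance half: the paper simply invokes Proposition~\ref{main_prop}, whereas you replace it with a direct comparison argument against $\dot{y}=-k_1\alpha_1(y)$, $y(t_0)=h_1(x(t_0))\geq 0$. This is more than a stylistic choice: as you observe, Proposition~\ref{main_prop} carries the hypothesis $\partial h/\partial x\neq 0$ on the boundary, which fails for $h_1$ at the cusp $r_x=r_y=0\in\partial\mathcal{H}$, so the paper's citation is, strictly speaking, not licensed there; your comparison route (with $\alpha_1$ locally Lipschitz so that $y\equiv 0$ is the unique solution from zero) does not need that hypothesis and quietly repairs this. The only blemishes are cosmetic: the ``origin'' and the ``cardioid cusp'' are the same point, not two; and the symbol $a$ is overloaded (cardioid size versus velocity bound), though that clash is inherited from the paper itself.
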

\begin{proof}
The feasibility of \eqref{barrier_condition_1} in the set $\mathcal{H}$ can be established using the arguments in the proof of Lemma \ref{Lemma2} and it can be concluded that $K_{h_1}(x)$ is non-empty for any $k>0$ and any class-$\mathcal{K}$ function $\alpha_1(x)$. Through Proposition \ref{main_prop}, we conclude that any Lipschitz continuous controller $u(x)\in K_{h_1}(x)$ renders $\mathcal{H}$ forward invariant.
\end{proof}

\begin{remark}
The functions \eqref{barrier_2}, \eqref{barrier_3} and \eqref{barrier_1} are defined in terms of the variables $(r_x,r_y,\theta)$. The functions have relative degree one, considering $(v_x,v_y,\omega_z)$ as the inputs and relative degree two, if $(f_x, f_y , \tau_z)$ are considered as inputs. In the multi-barrier and multi-input scenario, establishing that the functions are control barrier functions and verification of the control sharing property (to be presented in section \ref{Sec:input_sharing}) becomes progressively challenging as the relative degree increases. In this work, by considering $(v_x,v_y,\omega_z)$ as the inputs, the functions $h_i$ have relative degree one and our analysis is significantly eased. Moreover, a high-performance inner-loop is designed to set $(f_x, f_y , \tau_z)$ such that the desired velocities commanded by the outer-loop are tracked. 
\end{remark}
\subsection{Quadratic programs for minimally invasive safety guarantees}\label{Quad_prog}
In practice, the CBF formulation is amenable to efficient online computation of feasible control inputs that ensure safety. As the functions $h_i(x)$ are CBFs, for a fixed $x$, each of the constraints \eqref{barrier_condition_1}, \eqref{barrier_condition_2} and \eqref{barrier_condition_3} are feasible linear constraints and we set up a minimum-norm Quadratic program 
\begin{equation}
    \begin{aligned}
u^{*} & =\argmin_{u \in \mathcal{U}}  \norm{u-u_{\text{nom}}(x)}^2 \\
\text{subject to} \ &:\  L_{f} h_i(x)+L_{g}h_i(x) u \geq -k_i\alpha(h_i(x)),\; i\in\{1,2,3\}
\end{aligned}
\label{opt-1}
\end{equation}
to determine a minimally invasive correction to a nominal controller $u_{\text{nom}}$ that ensures safety. The nominal controller $u_{\text{nom}}$ is designed to achieve a control objective with no consideration for the safety constraints. The classical Quadratic program \eqref{opt-1} with the quadratic cost and linear constraints in the decision variable $u=\{v_x,v_y,\omega_z\}$ can be efficiently solved at every sampling instant. In this work, we use a linear controller $u_{\text{nom}}(x)=[-p_{1}r_x \;\; -p_{2}r_y \;\; -p_{3}(\theta-\pi)]^{\top}$, $p_i>0$ as the nominal controller for the Slider kinematics \eqref{Eq:Slider_reduce}, to exponentially drive the Slider to the origin of the $(x, y)$ plane with its orientation opposite to that of the orientation of the docking port. More sophisticated controllers, like the Model Predictive Controller (MPC) incorporating input and state constraints can also be chosen as a nominal controller. Establishing the Lipschitz continuity of $u^{*}(x)$, as required by Proposition \ref{main_prop}, is challenging and we consider this as part of our future work.  
\section{Input sharing among multiple Control Barrier Functions}\label{Sec:input_sharing}
Although the barrier function $h_1(x)$ has been shown to render the set $\mathcal{H}$ forward invariant and the functions $h_2(x)$ and $h_3(x)$ have been shown to be valid CBFs in the sense of Definition \ref{Barrier_Defn}, it remains to be shown if they can be enforced together. In other words, the three inputs $\{v_x, v_y, \omega_z\}$ are shared by the three barrier constraints in the Quadratic program \eqref{opt-1} and it remains to be established if \eqref{opt-1} remains feasible in the zone of operation of the Slider. 

We next recall the idea of control sharing CBFs from \cite{Sharing_Journal} and adapt it to the case of relative degree one and generalized to $n$-barrier functions, which is the case under consideration in this work. We refer to this scenario as multi-input sharing among multiple CBFs.
\begin{definition}[Control sharing CBFs \cite{Sharing_Journal}]\label{control_sharing_CBFs}
Consider the control system \eqref{affine} and the CBFs $h_{i}$, $i\in\{1,2, \ldots , n\}$, $n>1$, defined on $\mathcal{X}$. The CBFs $h_{i}(x)$ are said to have the control sharing property, if there exists control $u \in \mathcal{U}$ such that for any $x \in \mathcal{X}$
\begin{equation}
 \dot{h}_i \geq - k_i\alpha_i(h_i(x)),\; i\in\{1,2, \ldots , n\}.
 \label{sharing_condition}
\end{equation}
\end{definition}

In this work, we investigate the scenario where the three control inputs $(v_x,v_y,\omega_z)$ are shared among three barrier functions. The conditions in  \eqref{sharing_condition} require that the control sharing be valid in establishing both forward invariance of the safe sets and asymptotic stability of the safe sets. In general, establishing control sharing in both scenarios is extremely challenging. In this work, we are tasked with the analysis of a relatively simpler scenario for the following reasons : 1) The function $h_1(x)$ has been shown to render the set $\mathcal{H}$ forward invariant but not asymptotically stable. 2) The functions $h_2(x)$ and $h_3(x)$ cannot simultaneously be negative. This is evident from Figure \ref{fig:vis_lock}. In the region outside the red dashed lines, only one of $h_2(x)$ and $h_3(x)$ can be negative. This implies that, effectively, the control sharing needs to be investigated in two cases 1) $h_i(x) \geq 0$ for $i\in\{1,2,3\}$ and 2) $h_1(x) \geq 0$ and $h_2(x)<0$. In the second case, as $h_2(x)<0$ implies $h_3(x) \geq 0$, the latter condition is not explicitly stated.
%
\subsection{Invariance rendering CBFs}\label{Sec:Feasibility_Analysis}
In this subsection, we show that the set $\mathcal{I}=\mathcal{H}\cap\mathcal{V}_2\cap\mathcal{V}_3$ is rendered positively invariant when the three barrier constraints are enforced simultaneously through the Quadratic Program \eqref{opt-1}. 
\begin{proposition}
The set of control inputs $K_{h_1}(x)\cap K_{h_2}(x)\cap K_{h_3}(x)$ as defined in \eqref{control_family} for the functions $h_i(x)$ is non-empty for all $x \in \mathcal{I}$ and any Lipschitz continuous controller $u(x) \in K_{h_1}(x)\cap K_{h_2}(x)\cap K_{h_3}(x)$ renders the set $\mathcal{I}$ forward invariant under the dynamics \eqref{Eq:Slider_reduce}.
\end{proposition}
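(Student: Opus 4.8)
The plan is to exploit two structural facts: on $\mathcal{I}=\mathcal{H}\cap\mathcal{V}_2\cap\mathcal{V}_3$ all three barrier functions are non-negative, and the outer-loop dynamics \eqref{Eq:Slider_reduce} is driftless (a single integrator, up to the state-dependent rotation relating body- and inertial-frame velocities). Together these make the input sharing on $\mathcal{I}$ conflict-free almost for free, in sharp contrast to the complementary regime discussed below.

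First I would settle non-emptiness. Fix $x\in\mathcal{I}$. Then $h_i(x)\ge 0$ for $i\in\{1,2,3\}$, and since each $\alpha_i$ is an extended class-$\mathcal{K}$ function we have $\alpha_i(h_i(x))\ge 0$, so the right-hand side $-k_i\alpha_i(h_i(x))$ of every barrier constraint in \eqref{control_family} is non-positive. Because \eqref{Eq:Slider_reduce} has no drift term, $L_f h_i(x)=0$; hence the admissible input $u=0$ (which belongs to $\mathcal{U}$, a box centred at the origin) yields $L_f h_i(x)+L_g h_i(x)\,u=0\ge -k_i\alpha_i(h_i(x))$ for $i=1,2,3$ simultaneously. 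Therefore $0\in K_{h_1}(x)\cap K_{h_2}(x)\cap K_{h_3}(x)$ for every $x\in\mathcal{I}$, so the intersection is non-empty and the Quadratic Program \eqref{opt-1} is feasible throughout $\mathcal{I}$ — that is, the three inputs are shared without conflict on $\mathcal{I}$.

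For forward invariance, let $u(x)$ be any Lipschitz controller lying in $K_{h_1}(x)\cap K_{h_2}(x)\cap K_{h_3}(x)$. Since $u(x)\in K_{h_i}(x)$ for each $i$, the forward-invariance conclusions already obtained in Lemmas \ref{Lemma1} and \ref{Lemma2} (via Proposition \ref{main_prop}) apply to each function separately: the comparison-lemma argument there — bounding $h_i(x(t))$ below by the solution of $\dot y=-k_i\alpha_i(y)$, which stays non-negative whenever it starts non-negative — gives $h_i(x(t))\ge 0$ for all $t\ge t_0$ along the (unique, since the closed loop is Lipschitz) trajectory with $x(t_0)\in\mathcal{I}$. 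Hence the trajectory never exits $\mathcal{H}$, $\mathcal{V}_2$ or $\mathcal{V}_3$, so it remains in their intersection $\mathcal{I}$, which is exactly the claimed forward invariance.

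I do not anticipate a genuine obstacle in this Proposition: the vanishing of $\nabla h_1$ at the origin and its smallness near $\partial\mathcal{H}$, which blocked asymptotic stability in Lemma \ref{Lemma1}, is irrelevant here because feasibility on $\mathcal{I}$ is certified by $u=0$ regardless of the gradient magnitudes; and the only domain caveat is the exclusion of $\mathcal{R}=\{(0,0)\}\times\mathbb{S}^1$, where $h_2,h_3$ are undefined, which is the same harmless restriction noted after Lemma \ref{Lemma2} (it corresponds to the Slider already docked). The hard part of the broader input-sharing analysis is the complementary case $h_1\ge 0$, $h_2<0$ (equivalently $h_3\ge 0$): there $u=0$ is no longer admissible, and one must exhibit a single input that keeps $\mathcal{H}$ invariant while simultaneously driving $h_2$ back towards the visual-locking set — a construction that cannot rely on the existence-of-$u=0$ shortcut used above and has to be argued directly (and leads into the finite-time convergence result).
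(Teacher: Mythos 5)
Your proof is correct and takes essentially the same route as the paper: the paper argues that for $x\in\mathcal{I}$ each constraint defines a half-space in the input space containing the origin (precisely because $h_i(x)\ge 0$ makes the right-hand side $-k_i\alpha_i(h_i(x))$ non-positive and the dynamics \eqref{Eq:Slider_reduce} is driftless), so their intersection is non-empty, and forward invariance then follows from Proposition \ref{main_prop}. Your phrasing via the explicit feasible point $u=0$ is just a more direct statement of the paper's geometric observation.
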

\begin{proof}
Consider the three conditions  \eqref{barrier_condition_2}, \eqref{barrier_condition_3} and \eqref{barrier_condition_1}. For each $x$, these conditions define half-spaces in the input space $\mathbb{R}^3$ denoted by $H_i(x)$. When $h_2(x) \geq 0$ and $h_3(x) \geq 0$, the planes defining the boundaries of conditions \eqref{barrier_condition_2} and \eqref{barrier_condition_3} are parallel and offset on opposite sides of the origin, with the origin contained in their interior. The set $H_2(x) \cap H_3(x)$ is non-empty and contains the origin and therefore $K_{h_1}(x)\cap K_{h_2}(x)$ is non-empty. When $h_1(x) \geq 0$, the set $H_1(x)$ is a half-space containing the origin in its interior and the plane constituting its boundary is parallel to the $\omega_z$ axis. Therefore, for any $x\in\mathcal{I}$, the intersection of the three half-spaces and $K_{h_1}(x)\cap K_{h_2}(x)\cap K_{h_3}(x)$ are non-empty. The forward invariance of the set $\mathcal{I}$ under any Lipschitz continuous controller $u(x) \in K_{h_1}(x)\cap K_{h_2}(x)\cap K_{h_3}(x)$ is concluded by virtue of Proposition \ref{main_prop}.
\end{proof}
\subsection{Visual lock in finite time}\label{finite_lock}
In this subsection, we begin by showing that the visual locking mode is asymptotically reached, when the system is initialized in the safe zone with no visual lock on the docking port. 
\begin{proposition}\label{asymptotic_lock}
The set  $K_{h_1}(x)\cap K_{h_2}(x)$ as defined in \eqref{control_family} for the functions $h_1(x)$ and $h_2(x)$ is non-empty for all $x \in \mathcal{H}\cap(\mathcal{D}_2\setminus\mathcal{V}_2)$ and any Lipschitz continuous control input $u(x) \in K_{h_1}(x)\cap K_{h_2}(x)$ renders the set $\mathcal{H}$ forward invariant and the set $\mathcal{V}_2$ asymptotically stable for the dynamics \eqref{Eq:Slider_reduce}.
\end{proposition}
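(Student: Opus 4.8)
The plan is to reduce the claim to the two single-barrier results already in hand, Lemma \ref{Lemma1} and Lemma \ref{Lemma2}, by showing that $K_{h_1}(x)\cap K_{h_2}(x)$ is non-empty for every $x\in\mathcal{H}\cap(\mathcal{D}_2\setminus\mathcal{V}_2)$ (understood as excluding the degenerate set $\mathcal{R}=\{(0,0)\}\times\mathbb{S}^1$, where $h_2$ is undefined and the Slider is already docked). The structural observation that makes this work is the decoupling of the two barrier constraints: from \eqref{coeffs}, $a_{13}=\partial h_1/\partial\theta=0$, so the constraint \eqref{barrier_condition_1} cutting out the half-space $H_1(x)$ does not involve $\omega_z$ at all; whereas, from the proof of Lemma \ref{Lemma2}, $b_{13}=\partial h_2/\partial\theta=-1\neq 0$, so $\omega_z$ enters the constraint \eqref{barrier_condition_2} cutting out $H_2(x)$ with a fixed nonzero coefficient. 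Thus $\omega_z$ is a ``spare'' input that can be dedicated to enforcing the $h_2$-constraint after $v_x,v_y$ have been fixed to enforce the $h_1$-constraint.

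Concretely, I would exhibit an admissible control of the form $u=(0,0,\omega_z)$. Since $x\in\mathcal{H}$ gives $h_1(x)\geq 0$, we have $-k_1\alpha_1(h_1(x))\leq 0$, so $u=(0,0,\omega_z)$ satisfies \eqref{barrier_condition_1} for every $\omega_z$; equivalently, $H_1(x)$ is a closed half-space whose bounding plane is parallel to the $\omega_z$-axis and which contains the origin. Substituting $v_x=v_y=0$ into \eqref{barrier_condition_2} leaves the scalar condition $-\omega_z\geq -k_2\alpha_2(h_2(x))$, i.e. $\omega_z\leq k_2\alpha_2(h_2(x))$. On $\mathcal{D}_2\setminus\mathcal{V}_2$ we have $h_2(x)\in(-b_1,0)$, so $k_2\alpha_2(h_2(x))$ is negative but bounded below; under the same choice of $k_2$, $\alpha_2$ and $b_1$ made in Lemma \ref{Lemma2} (which keeps $k_2|\alpha_2|<b$ on $\mathcal{D}_2$, so that $H_2(x)$ meets the radius-$b$ ball inside $\mathcal{U}$), the choice $\omega_z=k_2\alpha_2(h_2(x))\in(-b,0)$ is admissible. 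Hence $u(x)=(0,0,k_2\alpha_2(h_2(x)))\in K_{h_1}(x)\cap K_{h_2}(x)$, so the intersection is non-empty; geometrically, the $\omega_z$-axis segment of $\mathcal{U}$ lies entirely in $H_1(x)$ and always punctures $H_2(x)$.

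With non-emptiness established, the two conclusions follow from the earlier results. Any Lipschitz controller $u(x)\in K_{h_1}(x)\cap K_{h_2}(x)$ lies in $K_{h_1}(x)$, so by Lemma \ref{Lemma1} (Proposition \ref{main_prop} applied to $h_1$) the set $\mathcal{H}$ is forward invariant under \eqref{Eq:Slider_reduce}; it also lies in $K_{h_2}(x)$, and since $h_2$ is a CBF on $\mathcal{D}_2$ with $\nabla h_2\neq 0$ (its $\theta$-component is $-1$), Lemma \ref{Lemma2}/Proposition \ref{main_prop} render $\mathcal{V}_2$ asymptotically stable in $\mathcal{D}_2$. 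Forward invariance of $\mathcal{H}$ plus monotone increase of $h_2$ keeps the trajectory in $\mathcal{H}\cap\mathcal{D}_2$, so these statements apply along the whole trajectory; a comparison-lemma argument against $\dot y=-k_2\alpha_2(y)$ with $y(0)=h_2(x(0))<0$ makes the attractivity explicit, since $\alpha_2$ being extended class-$\mathcal{K}$ forces $y(t)\to 0^-$ and hence $h_2(x(t))\to 0$.

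The main obstacle here is modest and is essentially the uniform actuator-bound check: the ``spare'' $\omega_z$ needed to meet the $h_2$-constraint must never leave $[-b,b]$ over the entire region, which is exactly why the parameter conditions of Lemma \ref{Lemma2} must be carried along. The only genuine subtlety to flag is the care needed at $\partial\mathcal{H}$ (where $h_1=0$, so $H_1(x)$ contains the origin only on its boundary — still a valid closed half-space) and the exclusion of $\mathcal{R}$, which is harmless since it corresponds to the docked configuration.
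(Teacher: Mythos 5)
Your proposal is correct, and its overall architecture matches the paper's: reduce everything to Lemma \ref{Lemma1}, Lemma \ref{Lemma2} and Proposition \ref{main_prop}, with the only new content being the non-emptiness of $K_{h_1}(x)\cap K_{h_2}(x)$. Where you differ is on that one new step, and the difference is in your favour. The paper's own proof simply notes that $K_{h_1}(x)$ and $K_{h_2}(x)$ are each non-empty and then asserts that their intersection is non-empty, which is a non sequitur on its face; the justification the authors presumably have in mind is the geometric observation made in the preceding proposition about $\mathcal{I}$, namely that the bounding plane of $H_1(x)$ is parallel to the $\omega_z$-axis while $\omega_z$ enters the $h_2$-constraint with coefficient $-1$. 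You make exactly that decoupling explicit and turn it into a constructive witness, $u=(0,0,k_2\alpha_2(h_2(x)))$, checking admissibility against the bound $\min(a,b)>k_2\alpha_2(b_1)$ carried over from Lemma \ref{Lemma2}, and handling the boundary case $h_1=0$ and the excluded set $\mathcal{R}$. Your closing comparison-lemma remark for attractivity is also a useful addition the paper leaves implicit. The one caveat worth noting applies to the paper as much as to you: the admissibility bound is stated in terms of $\alpha_2(b_1)$ rather than $|\alpha_2(-b_1)|$, which coincide only when $\alpha_2$ is odd; since you inherit the paper's convention, this is not a defect of your argument.
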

\begin{proof}
By Lemma \ref{Lemma2}, $h_2(x)$ is a CBF in the set $\mathcal{D}_2$ and $K_{h_2}(x)$ is non-empty for all $x\in\mathcal{D}_2$. Similarly by Lemma \ref{Lemma1}, $K_{h_1}(x)$ is non-empty for all $x\in\mathcal{H}$. For all $x \in \mathcal{H}\cap(\mathcal{D}_2\setminus\mathcal{V}_2)$, the set $K_{h_1}(x)\cap K_{h_2}(x)$ is non-empty and by Proposition \ref{main_prop}, any Lipschitz continuous controller $u(x)\in K_{h_1}(x)\cap K_{h_2}(x)$ renders the set $\mathcal{H}$ forward invariant and the set $\mathcal{V}_2$ asymptotically stable. 
\end{proof}

For enhanced autonomy, it is desirable that visual locking mode is reached in finite time, so that autonomous operation can be resumed after visual locking. This can be achieved through the following strategy. When the absence of the visual lock is detected, the barrier functions $h_2(x)$ and $h_3(x)$ are redefined with the parameter $0<\bar{\theta}_v \ll \theta_v$. By Proposition \ref{asymptotic_lock}, the visual locking mode, parameterized by $\bar{\theta}_v$, is achieved asymptotically and this implies that the visual locking mode parameterized by $\theta_v$, is reached in finite time, following which the parameter in $h_2(x)$ and $h_3(x)$ can be switched back to $\theta_v$.  

\begin{figure*}[h]
    \centering
\begin{multicols}{2}
\includegraphics[width=0.35\textwidth]{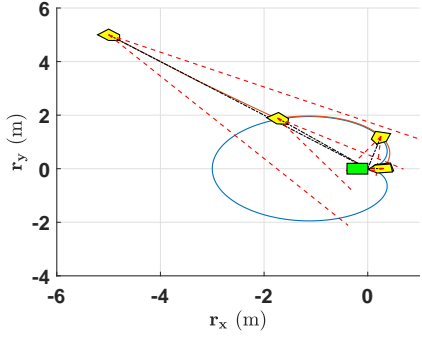}\par\subcaption{The slider achieves safe autonomous docking with a visual lock on the docking port throughout the docking maneuver.\label{sim1p1}}
  \includegraphics[width=0.35\textwidth]{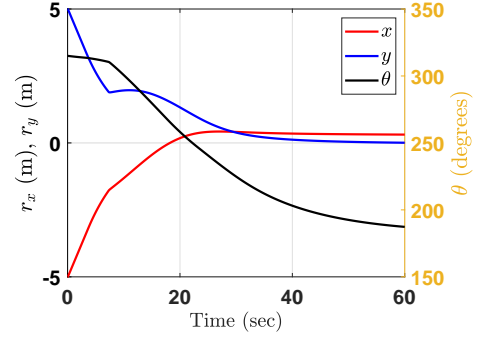}\par\subcaption{The evolution of the states of the slider indicate that the Slider asymptotically reaches the docking port with orientation of $\unit{\pi}{\radian}$. \label{sim1p2}}
    \end{multicols}
 \begin{multicols}{2}   
  \includegraphics[width=0.35\textwidth]{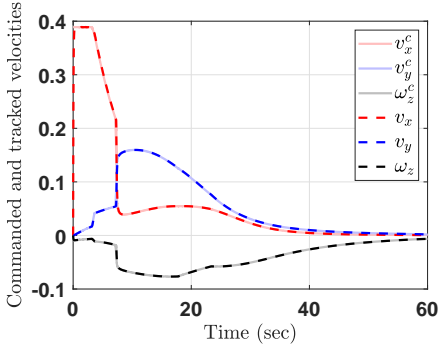}\par\subcaption{The velocities $\{v_x^c, v_y^c, \omega_z^c\}$ commanded by the Quadratic program \eqref{opt-1} are faithfully tracked by the inner-loop controller \eqref{outer-loop}. \label{sim1p3}}
   \includegraphics[width=0.35\textwidth]{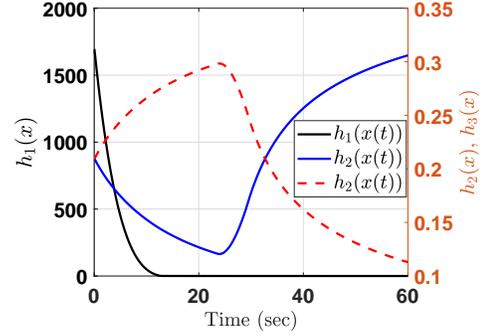}\par\subcaption{The barrier functions $h_i(x)$ are maintained positive throughout the docking maneuver \label{sim1p4}}
  \end{multicols}
  \caption{Simulation results illustrating the invariance of the set defining safety and visual locking. The Slider is initialized in the safe set with a visual lock on the docking port.}
  \label{Stationary_Docking}
\end{figure*}

\section{Simulation results}\label{Sec:Simulation_Results}
In the simulation results presented in this section, the design parameters defining the Cardioid and the range of the visual sensor are $a=\unit{0.75}{\meter}$ and $\theta_v=\unit{\frac{\pi}{15}}{\radian}$ respectively. The parameters of the Slider platform are $m=\unit{4.82}{\kilo\gram}$ and $I_{zz}=\unit{0.11}{\kilo\gram\meter^2}$. 
The admissible control set $\mathcal{U}=[-0.5 \;\; 0.5]^3$ restricts the magnitude of the linear velocities and the angular velocities to within \unit{0.5}{\meter/\second} and \unit{0.5}{\radian/\second} respectively. The parameters $p_i$ in the nominal linear controller from subsection \ref{Quad_prog} are chosen to be 0.15. The parameters $c_i$ in the inner-loop control \eqref{outer-loop} are chosen to be 20. The quadratic program \eqref{sharing_condition} is solved and the control loop is closed at 100 Hz. 
%
%

\subsubsection{Case 1 : Invariance of safety and visual locking for initialization in $\mathcal{H}\cap\mathcal{V}_2\cap\mathcal{V}_3$}
In Figure \ref{Stationary_Docking}, were present simulation results for the case where the docking port is at the origin of the $(x, y)$ plane and orientated along the $x_I$-axis. The docking maneuver begins with the Slider in the safe zone and with a visual lock on the docking port. In Figure \ref{sim1p1}, we see that the Slider reaches the docking port asymptotically while avoiding the area inside the Cardioid and while maintaining a visual lock on the docking port throughout the maneuver. In Figure \ref{sim1p2}, we see that the position of the Slider goes to the origin asymptotically and the Slider reaches the origin pointing in the negative $x$-direction. From Figure \ref{sim1p3}, we see that the kinematic inputs commanded by the Quadratic program \eqref{opt-1} are contained within the set $\mathcal{U}$ and are faithfully tracked by the inner-loop controller \eqref{outer-loop}. From Figure \ref{sim1p4}, we see that the barrier functions $h_i(x)$ remain positive throughout the docking maneuver, indicating the positive invariance of the set $\mathcal{H}\cap\mathcal{V}_2\cap\mathcal{V}_3$.  
\subsubsection{Case 2 : Finite time convergence to visual locking}
In Figure \ref{Attractivity_Docking}, we validate the strategy proposed in \ref{finite_lock} for establishing a visual lock on the docking port in finite time. The Slider is initialized in the safe zone, but with the docking port outside the visual range of the Slider. From Figure \ref{sim2p1}, we see that visual locking is achieved in finite-time and the Slider approaches the docking station asymptotically, navigating through the safe zone. From Figure \ref{sim2p4}, it can be verified that $h_2(x)$ turns positive in finite time and $h_1(x)$ is maintained positive, ensuring safety throughout the docking maneuver. In Figure \ref{sim2p3}, we see that the velocities $\{v_x^c, v_y^c, \omega_z^c\}$ commanded by the Quadratic program \eqref{opt-1} are faithfully tracked by the inner-loop controller \eqref{outer-loop}.
\begin{figure*} 
    \centering
    \begin{multicols}{2}
    \includegraphics[width=0.38\textwidth]{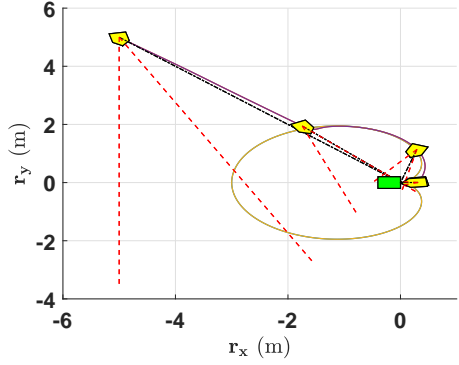}\par\subcaption{The Slider achieves safe docking with finite-time convergence to the visual locking mode. \label{sim2p1}}
     \includegraphics[width=0.35\textwidth]{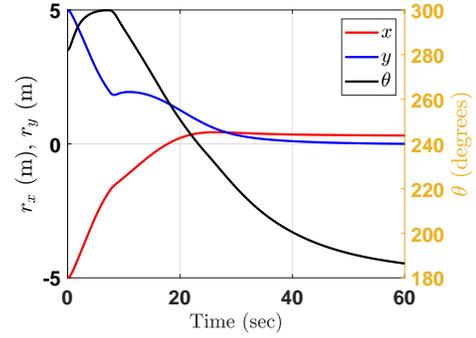}\par\subcaption{The evolution of the states of the slider indicate that the Slider asymptotically reaches the docking port with orientation of $\unit{\pi}{\radian}$. \label{sim2p2}}
    \end{multicols}
 \begin{multicols}{2}
    \includegraphics[width=0.35\textwidth]{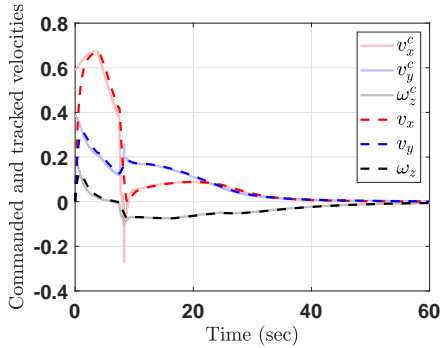}\par\subcaption{The velocities $\{v_x^c, v_y^c, \omega_z^c\}$ commanded by the Quadratic program \eqref{opt-1} are faithfully tracked by the inner-loop controller \eqref{outer-loop}\label{sim2p3}}
     \includegraphics[width=0.35\textwidth]{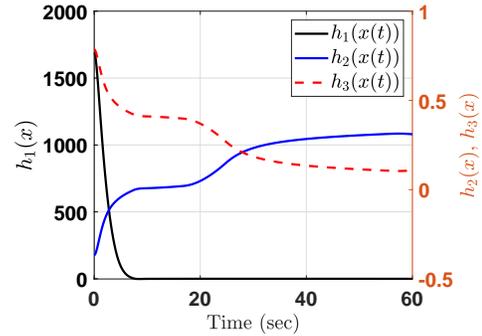}\par\subcaption{The barrier function $h_2(x)$ crosses the $x$-axis into the positive side in finite time, indicating the finite time convergence to the visual locking mode. \label{sim2p4}}
  \end{multicols}
  \caption{Simulation results illustrating the finite time convergence to the visual locking mode when the Slider is initialized in the safe zone but with the docking port out of its visual range.}
  \label{Attractivity_Docking}
\end{figure*}


\section{CONCLUSIONS}\label{Sec:Conclusions}
In this work, we presented a control strategy for safe and autonomous steering of a floating robotic emulation platform, the Slider, to a docking port on a stationary docking station. Control barrier functions (CBFs) were designed to enforce safe distance from the docking station and a correct direction of approach (in one-shot via the Cardioid CBF) and a visual lock on the docking port throughout the docking maneuver. We showed that the barrier functions exhibited the control sharing property in establishing positive invariance of the set defining safety and visual locking. We also presented a control strategy that results in finite-time convergence to the visual locking mode, when from Slider is initialized in the safe set but with no visual lock on the docking port. Simulation results were presented to validate the proposed control strategies. As part of our future work, we look to implement the docking strategies presented in this article on an experimental Slider platform and address the scenario of docking onto a moving station.




\bibliographystyle{ieeetr}
\bibliography{Barrier_Refs}


\end{document}